\newtheorem{theorem}{Theorem}
\newtheorem{corollary}{Corollary}
\theoremstyle{definition}
\newtheorem{definition}{Definition}
\newcommand{\cc}[1]{\operatorname{\mathsf{#1}}}
\begin{document}
\title{Computational tameness of classical non-causal models}
\author{\"Amin Baumeler}
\affiliation{Faculty of Informatics, Universit\`{a} della Svizzera italiana, Via G.\ Buffi 13, 6900 Lugano, Switzerland}
\affiliation{Facolt\`{a} indipendente di Gandria, Lunga scala, 6978 Gandria, Switzerland}
\author{Stefan Wolf}
\affiliation{Faculty of Informatics, Universit\`{a} della Svizzera italiana, Via G.\ Buffi 13, 6900 Lugano, Switzerland}
\affiliation{Facolt\`{a} indipendente di Gandria, Lunga scala, 6978 Gandria, Switzerland}
\begin{abstract}
	We show that the computational power of the non-causal circuit model, {\it i.e.}, the circuit model where the assumption of a {\em global causal order\/} is replaced by the assumption of {\em logical consistency}, is completely characterized by the complexity class~$\cc{UP}\cap\cc{coUP}$.
	An example of a problem in that class is factorization.
	Our result implies that classical deterministic closed timelike curves (CTCs) cannot efficiently solve problems that lie {\em outside\/} of that class.
	Thus, in stark contrast to other CTC models, these CTCs {\em cannot\/} efficiently solve~$\cc{NP-complete}$ problems, unless~$\cc{NP}=\cc{UP}\cap\cc{coUP}=\cc{coNP}$, which lets their existence in nature appear {\em less implausible}.
	This result gives a new characterization of~$\cc{UP}\cap\cc{coUP}$ in terms of fixed points.
\end{abstract}
\keywords{causality, non-causal computation, closed timelike curves, complexity theory, UP}

\maketitle
\section{Motivation and results}
The {\em acyclic\/} feature of ``causality''~\cite{Pearl}, that an effect cannot be the cause of its cause, plays a central role in everyday live, physical theories, and models of computation.
A {\em cyclic\/} causal structure is --- in the classical
meaning\footnote{The noun ``paradox'' means a {\em seeming\/} contradiction as opposed to an {\em actual\/} contradiction.
	It originates from the Greek word {\em paradoxon\/} which is composed out of {\em para\/} (against) and {\em doxa\/} (opinion).
	We use the term {\em antinomy\/} for actual contradictions.}
of the following adjective --- paradoxical.
That may be a reason for why an {\em acyclic\/} notion is not only preferred but also a hidden assumption for many theories.
Objections against {\em cyclic\/} causal structures are the {\em grandfather antinomy\/} and the {\em uniqueness ambiguity\/}\footnote{The uniqueness ambiguity is also known under the name ``information paradox'' or ``information antinomy.''} (see,~{\it e.g.}, References~\cite{AaronsonWhy,Pienaar2013,Allen2014}).
The former reads: By travelling to the past and killing his or her own grandfather, one could never have been born to travel to the past to kill his or her own grandfather --- an {\em inconsistency}.
The latter is {\it ex nihilo\/} appearance of information, as illustrated in the following example.
Assume one morning you wake up to find a proof of~$\cc{P}=\cc{NP}$ on your desk.
You decide to publish it and, after publication, you travel back in time to the night before you found the proof to place the original copy on your desk, while your younger self is asleep.
Who wrote the proof?
More precisely, the uniqueness ambiguity arises when ``an uncomputed output is produced,'' in the sense that some ``theory specifies more the one final state given some initial state and evolution, but fails to give probabilities for each possibility''~\cite{Allen2014}. 
However, if the proof you find on your desk is {\em uniquely\/} determined by a process, then the proof does not appear {\it ex nihilo}, but is the {\em result\/} of that process.
The uniqueness ambiguity is often considered {\em less severe\/} than the grandfather antinomy.
But note that, according to Deutsch~\cite{Deutsch:1991jo}, solutions to problems need to emerge through evolutionary or rational processes; otherwise, the underlying theory would follow the {\em doctrine of creationism}.
By this, uniqueness ambiguities ``contradict the philosophy of science.''
Note that both problems are similar in their spirit; the grandfather antinomy, in accordance to Allen's~\cite{Allen2014} formulation of the uniqueness ambiguity, reads: The grandfather antinomy arises whenever a theory fails to specify {\em any consistent\/} final state given some initial state and evolution.
In the following, we will refer to a model as being {\em logically consistent\/} whenever both problems do not arise.

Closed timelike curves (CTCs) are loops in spacetime (see Figure~\ref{fig:ctcfuture}).
\begin{figure}
        \centering
        \subfloat[\label{fig:ctcfuture}]{
		\includegraphics{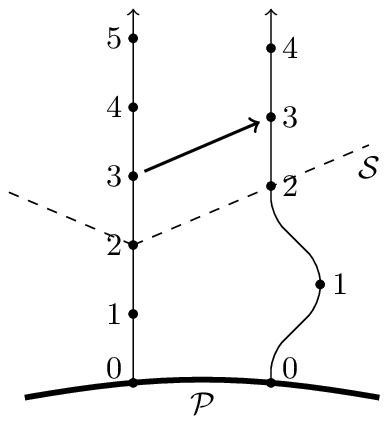}
        }
	\qquad
	\qquad
        \subfloat[\label{fig:ctcnow}]{
		\includegraphics{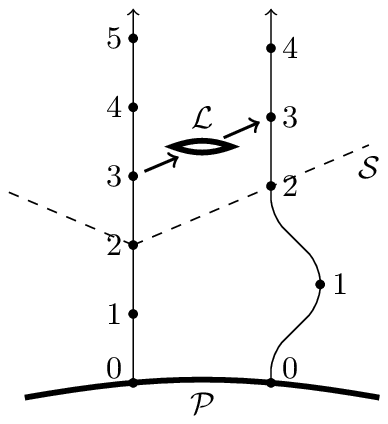}
	}
	\caption{(a) Example of a CTC created from a wormhole~\cite{Morris1988,inprep}. The lines from bottom to top represent the worldlines of two mouths of a wormhole in a space-time diagram. Proper times of the mouths are identified. The right mouth undergoes some time dilation in order to generate a CTC in the future of~$\mathcal S$. Thus, by traveling through ordinary space from point~3 on the left worldline to point~3 on the right, and by then entering the wormhole, one exits at point~3 on the left again.
		(b) An experimenter sitting in the region labeled by~$\mathcal L$ can freely manipulate the degrees of freedom of an object traveling on the CTC.
	}
	\label{fig:CTC}
\end{figure}
That is, by traveling on such a curve, one would bump into oneself on the same position in space {\em and\/} time.
Interestingly, CTCs appear as solutions to Einstein's equations of general relativity (see,~{\it e.g.}, References~\cite{Lanczos1924,Godel:1949eb,Taub1951,Newman1962,Kerr1963,Tipler1974,Griffiths2009}), yet they have been or still are believed to be unphysical; their underlying structure is {\em cyclic}.
For over twenty years, people have studied different models of CTCs and their implications.
Scientists around Novikov and Thorne~\cite{Friedman1990,Echeverria:1991ko,Friedman:1990ja,Lossev1992,Novikov1992,Mikheeva1993} analyzed CTCs in the gravitational setting and found self-consistent dynamics for all initial conditions considered.
In more detail, they studied the trajectories of objects like billiard balls that, once the initial conditions of the objects have been specified (that is, on the surface~$\mathcal P$ in Figure~\ref{fig:ctcfuture}), travel trough CTCs and bounce off themselves.
Their result is surprising: {\em Multiple}, as opposed to {\em zero\/} (``to one's naive expectation''~\cite{Echeverria:1991ko}), self-consistent trajectories to initial conditions that lead to self-collisions were found.
Self-{\em in\/}consistent trajectories are simply neglected by the means of Novikov's principle of self-consistency~\cite{Friedman:1990ja}.
While the grandfather antinomy is avoided, the uniqueness ambiguity persists.
Deutsch~\cite{Deutsch:1991jo} analyzed CTCs in the quantum information realm and showed that there, the grandfather antinomy never occurs.
Because {\em multiple\/} consistent states to some initial conditions exist, Deutsch singles out the mixture of all consistent states that maximizes the entropy as the solution; by this he mitigates the uniqueness ambiguity.
This maximum-entropy strategy, however, has a price: The evolution becomes non-linear.
So, the self-consistent state might be a mixed state, by which one is forced to consider mixed states as {\em ontic}.
This means that the states of the systems traveling on Deutsch CTCs describe ``reality'' as opposed to the {\em knowledge of an observer about a system\/}~\cite{Wallman2012}.
Bennett, Leung, Smith and Smolin~\cite{Bennett2009} criticized the results on the computational power of Deutsch CTCs by pointing at a ``linearity trap:'' If one uses a mixture of problems as input to some CTC, one is not given the mixture of the solutions.
By this, in similar spirit to our work and to Reference~\cite{inprep}, they define a (possibly weaker) CTC model where input-output pairs are correlated correctly.
Note that the present result is not akin to the ``linearity trap,'' as our underlying models are linear.
Pegg~\cite{Pegg:2001wa} and others~\cite{Bennett,Svetlichny:2009ve,Lloyd:2011ir,Svetlichny:2011gq,Ralph:2012cd} designed a different model of CTCs, in which states are sent with the help of quantum teleportation to the past (via postselection).
That model, however, also leads to a non-linear evolution.
Recently, Oreshkov, Costa, and Brukner~\cite{Oreshkov2012} came up with a framework for quantum correlations without global causal order.
There, the main assumptions are {\em linearity\/} and {\em local validity\/} of quantum theory.
Interestingly, the framework describes correlations that cannot be simulated with a {\em global\/} causal order~\cite{Oreshkov2012,Baumeler3parties,simplestcausalinequalities,inprep}, and allows for advantages in query~\mbox{\cite{Chiribella2012,Colnaghi2012,Chiribella2013,Araujo2014,Procopio2015}}, as well as communication complexity~\cite{Feix2015,Guerin2016}.
The classical special case~\cite{Baumeler2016} of that framework was shown to allow for {\em classical deterministic CTCs\/}~\cite{inprep} where both problems (grandfather antinomy and uniqueness ambiguity) {\em never\/} arise\footnote{It is not the case that the problems are {\em concealed\/} due to {\em lack of knowledge\/}~\cite{Wallman2012}, but they do not even arise on the {\em ontic\/} level. Note that Wallman and Bartlett~\cite{Wallman2012} furthermore studied Deutsch CTCs where the underlying systems are taken from Spekkens'~\cite{Spekkens2007} ``toy theory'' of quantum theory.}; therefore, we refer to these CTCs as {\em logically consistent CTCs}.
The main conceptual difference to the works by Novikov and Thorne are that in setups with logically consistent CTCs, experimenters are {\em free\/} to manipulate the classical systems that travel on closed time-like curves, as opposed to be restricted in only choosing the {\em initial conditions\/} (see Figure~\ref{fig:ctcnow}).
One can also define a {\em non-causal circuit model\/} of computation~\cite{BaumelerNC} based on the assumption that both problems are avoided.
Here, we characterize the computational power of that circuit model which yields, as we are going to show, an upper bound on the computational power of {\em classical deterministic CTCs}.

Even though we do not know whether CTCs exist in nature or not, we can study their consequences.
As Aaronson~\cite{Aaronsonsciam} put it, one could assume that nature {\em cannot\/} efficiently solve certain tasks ({\it e.g.},~$\cc{NP-hard}$ problems), in the same spirit as nature cannot signal faster than at the speed of light, and conclude that certain theories are {\em un\/}physical.
The same idea is used in reconstructions of quantum theory where the standard, unintuitive axioms are replaced by ``more natural'' ones (see, {\it e.g.}, Reference~\cite{foils} for a collection of such reconstructions).
As it turns out~\cite{Aaronson:2009dy}, the class~$\cc{P_\text{CTC}}$ of all problems solvable in polynomial time by classical Deutsch CTCs is equal to its quantum analog~$\cc{BQP_\text{CTC}}$, and furthermore, equal to~$\cc{PSPACE}$.\footnote{Some intuition behind this result is that Deutsch CTCs make time {\em reusable\/} just as space is, and thus a polynomial amount of space equals a polynomial amount of {\em reusable\/} time~\cite{reusable}. That Deutsch CTCs can solve difficult computational problems efficiently was also pointed out by others (see, {\it e.g.}, References~\cite{Brun2003,Bacon2004}).}
Most recently, Aaronson, Bavarian, and Gueltrini~\cite{haltingp} showed that the Deutsch model can even solve the halting problem.
The model of CTCs where the loops are generated through quantum teleportation to the past can efficiently solve all problems in the class~\mbox{$\cc{PostBQP}=\cc{PP}$~\cite{Aaronson2005PostBQPPP,Lloyd:2011ir,Lloyd2011exp}}.
The classical analogue thereof can efficiently solve problems in~$\cc{PostBPP}=\cc{BPP_\text{path}}$~\cite{Han1997,Lloyd:2011ir}.
The inclusion relations between these classes are~$\cc{NP}\subseteq\cc{PostBPP}\subseteq\cc{PostBQP}\subseteq\cc{P_\text{CTC}}\subseteq\cc{EXP}$, where {\em strict\/} inclusions are conjectured.
Our contribution is to show
\begin{align}
	\cc{P_\text{LCCTC}}\subseteq\cc{P_\text{NCCirc}}=\cc{UP}\cap\cc{coUP}
	\,,
\end{align}
{\it i.e.}, 
that the class~$\cc{P_\text{NCCirc}}$ (NCCirc standing for ``non-causal circuit'') of decision problems solvable in polynomial time with the {\em non-causal circuit model\/} is equal to~$\cc{UP}\cap\cc{coUP}$,
and furthermore that~$\cc{P_\text{LCCTC}}$, which represents the power of classical computation equipped with logically consistent CTCs~\cite{inprep}, is upper bounded by this class.
The class~$\cc{UP}\cap\cc{coUP}$ contains all decision problems where for each possible answer (``yes'' or ``no'') a {\em unique\/} witness exists. 
Examples of such problems are integer factorization~\cite{Fellows1992} and parity games~\cite{Jurdzinski1998parity}, casted as decision problems.
Thus, this class is of great importance to the field of cryptography.
Moreover, it was shown~\cite{Homan2003} that worst-case one-way permutation exist if and only if~$\cc{P}\not=\cc{UP}\cap\cc{coUP}$.
Figure~\ref{fig:cc} depicts the inclusion relations among the mentioned complexity classes: ~$\cc{P}\subseteq\cc{P_\text{LCCTC}}\subseteq\cc{P_\text{NCCirc}}\subseteq\cc{NP}\subseteq\cc{PostBPP}\subseteq\cc{PostBQP}\subseteq\cc{P_\text{CTC}}$ (see Figure~\ref{fig:cc}).
\begin{figure}
	\centering
	\includegraphics{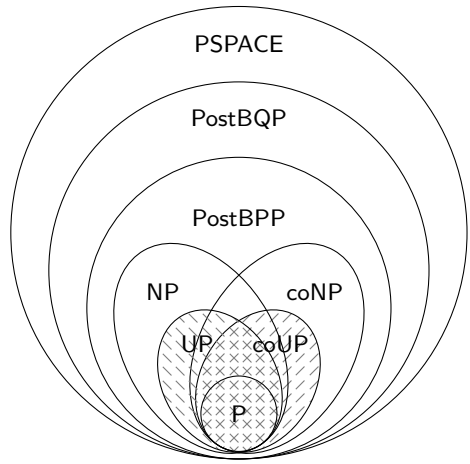}
	\caption{Inclusion relations among complexity classes under consideration. Deutsch CTCs (classical and quantum) can efficiently solve every problem in~$\cc{PSPACE}$, quantum CTCs with postselected teleportation every problem in~$\cc{PostBQP}$, classical CTCs with postselected teleportation every problem in~$\cc{PostBPP}$, and the classical non-causal circuit model every problem in~$\cc{UP}\cap\cc{coUP}=\cc{P_\text{NCCirc}}$ (region marked with crosses). Logically consistent CTCs {\em cannot\/} efficiently solve problems outside of the latter class.}
	\label{fig:cc}
\end{figure}
The logically consistent CTCs~\cite{inprep} are the weakest of all known CTCs in terms of computation, and are {\em unable\/} to efficiently solve~$\cc{NP-complete}$ problems (unless~$\cc{NP}=\cc{UP}\cap\cc{coUP}$, which implies~$\cc{NP}=\cc{coNP}$, by which the polynomial hierarchy would collapse to the first level~\cite{Stockmeyer1976}, which is highly doubted).
We also show the analog statement for {\em search\/} problems:
\begin{align}
	\cc{FP_\text{NCCirc}}=\cc{F}(\cc{UP}\cap\cc{coUP})=\cc{TFUP}
	\,,
\end{align}
where~$\cc{TFUP}$ is the class of all search problems with {\em unique\/} solutions.
Furthermore, these results give an interpretation of the classes~$\cc{UP}\cap\cc{coUP}$ and~$\cc{TFUP}$ in terms of {\em fixed points}: Every instance of such a problem can be solved by finding the {\em unique\/} deterministic fixed point of a transformation computable in polynomial time.

This work is organized as follows.
First, we describe the computational model, and after that, we define some complexity classes and present our results.
Then, we present an example on how to factorize integers by using that model, give conclusions, and state some open problems.

\section{Model of computation}
Classical deterministic CTCs that are free of the grandfather antinomy and the uniqueness ambiguity were studied in Reference~\cite{inprep}.
There it was shown that such CTCs are logically possible even in the case where~$N$ parties sitting in localized regions can {\em freely\/} interact with the systems travelling on the CTCs.
We ask the reader to consult Figure~\ref{fig:ourctcs} with the following description of the CTC model.
\begin{figure}
	\centering
	\includegraphics{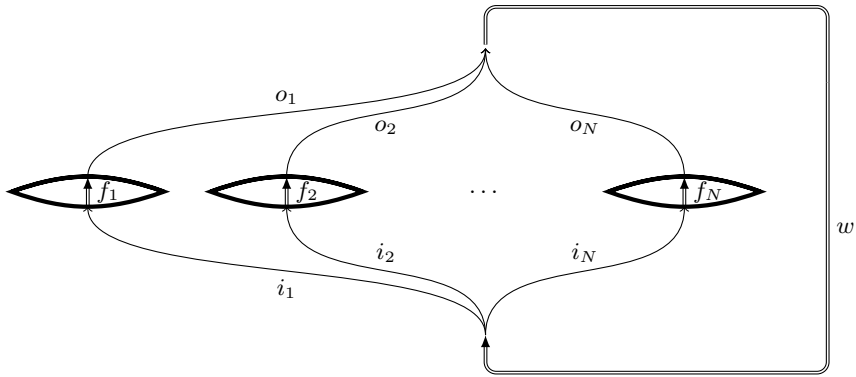}
	\caption{Each of the~$N$ local regions has a past and a future boundary. The state space on the past boundary for region~$j$ is~$\mathcal I_j$, and state space on the future boundary for the same region is~$\mathcal O_j$. The party within region~$j$ implements some function~$f_j:\mathcal I_j\rightarrow \mathcal O_j$. The output state of all regions undergoes some CTC transformation~$w$ (``process function'').}
	\label{fig:ourctcs}
\end{figure}
Let~$\mathcal I_j$ ($\mathcal O_j$) be the state spaces of the past (future) boundary of the region of some party~$1\leq j\leq N$.
Party~$j$ implements some function~$f_j$ of her {\em choice\/} from the set~$\mathcal D_j:=\{f_j:\mathcal I_j \rightarrow O_j\}$ of all functions from~$\mathcal I_j$ to~$\mathcal O_j$.
Thus, all~$N$ parties implement some function~$f:\mathcal I\rightarrow\mathcal O$.
Here, and in what follows, we drop indices in order to refer to {\em collection\/} of objects in all~$N$ regions, {\it e.g.},~$\mathcal I=\mathcal I_1\times\mathcal I_2\times\dots\times\mathcal I_N$.
In that setting, it was shown that a CTC free of the problems discussed is represented by some ``process function''~$w:\mathcal O\rightarrow\mathcal I$ where\footnote{We use~$\exists!$ to refer to the {\em uniqueness\/} quantifier.}
\begin{align}
	\forall f\in\mathcal D,\exists! i\in\mathcal I: w\circ f(i)=i
	\,.
	\label{eq:lc}
\end{align}
In words, the ``process function'' composed with {\em any choice\/} of the local operations results in a function that has a {\em unique\/} deterministic fixed point.
This is easily interpreted: If for some choice of local operations there would be {\em no\/} fixed point, then the grandfather antinomy is reproduced, if there are {\em more than one\/} fixed points, then the uniqueness ambiguity is reproduced.
For three or more local regions, such antinomy-free CTCs become possible~\cite{inprep} (in the sense that there exist local operations and ``process functions'' where a region necessarily is {\em in the past and in the future\/} of every other region.

The non-causal circuit model~\cite{BaumelerNC}, then again, is formulated in terms of {\em gates\/} as opposed to {\em parties\/} and ``process functions.''
A {\em circuit\/} is a collection of gates that are connected in an acyclic fashion, and where the input and output wires are numbered from~$1$ on upwards in integer steps (see Figure~\ref{fig:circuit}).
\begin{figure}
	\centering
	\subfloat[\label{fig:circuit}]{
		\includegraphics{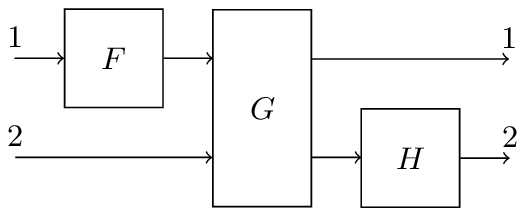}
	}
	\qquad
	\qquad
	\subfloat[\label{fig:circuitclosed}]{
		\includegraphics{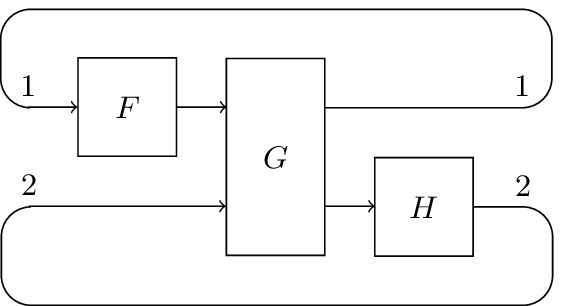}
	}
	\caption{(a) Example of a circuit where the input and output wires are labeled by~$1,2$. (b) Closed circuit constructed from~(a).}
\end{figure}
Without loss of generality, and if not otherwise stated, we assume that every wire carries a bit.
A {\em closed circuit\/} is a circuit without input and without output wires.
A circuit~$\mathcal{C}$ with the same number of input and output wires is transformed to a closed circuit~$\mathcal{C}'$ by connecting all input and output wires with the same label (see Figure~\ref{fig:circuitclosed}).
The introduced connections can thought as ``back in time'' --- in the same spirit as the ``back in time'' connections in CTC models.
Let~$c:\mathcal X\rightarrow\mathcal X$ be the function that is induced by the circuit~$\mathcal{C}$ (the set~$\mathcal Y$ is~$\{0,1\}^n$ where~$n$ is the number of input bits to~$C$).
We call a closed circuit~$\mathcal{C}'$ {\em logically consistent\/} if and only if~$\mathcal{C}$ has a unique deterministic fixed point,~{\it i.e.},
\begin{align}
	\exists! y\in\mathcal Y:c(y)=y
	\,.
	\label{eq:nccirc}
\end{align}
The difference between this model and the CTCs discussed above is that here, the gates are {\em fixed\/} whereas for the CTCs, every party can {\em arbitrarily\/} choose her local operation.
Thus, we omit the all quantifier in the logical-consistency condition for circuits (compare the above Equation with Equation~\eqref{eq:lc}).
Logically consistent closed circuits can be used to find unique fixed points, which is exploited in what follows.

\section{Complexity classes}
A {\em decision problem\/}~$\Pi$ is often casted as the membership problem of a language~$L\subseteq\Sigma^*$ with alphabet~$\Sigma$.
For simplicity, and without loss of generality, we choose~$\Sigma=\{0,1\}$.
An instance of~$\Pi$ is a string~$x\in\Sigma^*$, and the question is: Is~$x$ a word of~$L$, {\it i.e.}, does~$x\in L$ hold?
An algorithm that solves a decision problem outputs either~``yes'' or~``no.''

{\em Search problems}, then again, are mostly defined via binary relations.
A problem~$\Pi$ is associated with a binary relation~$R\subseteq\Sigma^*\times\Sigma^*$.
An instance of~$\Pi$ is some~$x\in\Sigma^*$, and the question is: {\em What\/} (if there exists one) is~$y\in\Sigma^*$ such that~$(x,y)\in R$?
An algorithm that solves a search problem outputs~$y$ if there exists a~$y$ satisfying~$(x,y)\in R$, and returns~``no'' otherwise.

We use~$|x|$ to denote the length of some string~$x\in\Sigma^*$.
A binary relation~$R$ is called {\em polynomially decidable\/} if there exists a deterministic Turing machine deciding the language~$\{(x,y)\in R\}$ in polynomial time, and~$R$ is called {\em polynomially balanced\/} if there exists some polynomial~$q$ such that~$(x,y)\in R$ implies~$|y|\leq q(|x|)$.

In the following definitions of complexity classes, we require that for every problem~$\Pi$ and given a string~\mbox{$x\in\Sigma^*$}, we can check in polynomial time whether~$x$ is an instance of~$\Pi$ or not.
If~$x$ is {\em not\/} an instance of~$\Pi$, then we abort.
We refer the reader to References~\cite{Papadimitriou1995,Arora2009} for common concepts in complexity theory.
\begin{definition}[Deterministic NCCirc algorithm]
	A {\em deterministic NCCirc algorithm\/}~$\mathcal{A}$ is a polynomial time deterministic algorithm that takes as input some bit string~$x\in\{0,1\}^*$ and outputs a Boolean circuit~$\mathcal{C}_x$ over AND, OR, and NOT, such that for every~$x$ the closed circuit~$\mathcal{C}'_x$ is logically consistent, {\it i.e.},
	\begin{align}
		\forall x\in\{0,1\}^*,\exists!y: c_x(y)=y
		\,.
	\end{align}
	If the fixed point~$y$ has the form~$y=1z$ for some~$z$, then we say~$\mathcal{A}$ {\em accepts\/}~$x$, otherwise,~$\mathcal{A}$ {\em rejects\/}~$x$.
	The algorithm~$\mathcal{A}$ {\em decides a language\/}~$L\subseteq\{0,1\}^*$ if~$\mathcal{A}$ accepts every~$x\in L$ and rejects every~$x\not\in L$.
	Furthermore, the algorithm~$\mathcal{A}$ {\em decides a binary relation\/}~$R\subseteq\{0,1\}^*\times\{0,1\}^*$ if for every~$x\in\{0,1\}^*$ the pair~$(x,y)$, with~$c_x(y)=y$, is in~$R$.
\end{definition}

Based on the above definition, we define the complexity classes~$\cc{P_\text{NCCirc}}$ and~$\cc{FP_\text{NCCirc}}$.
\begin{definition}[$\cc{P_\text{NCCirc}}$ and~$\cc{FP_\text{NCCirc}}$]
	The class~$\cc{P_\text{NCCirc}}$ contains all languages decidable by some deterministic NCCirc algorithm.
	The class~$\cc{FP_\text{NCCirc}}$ contains all binary relations decidable by some deterministic NCCirc algorithm.
\end{definition}
We will relate~$\cc{P_\text{NCCirc}}$ to the following complexity class.
\begin{definition}[$\cc{UP}$]
	The class~$\cc{UP}$ (Unambiguous Polynomial-time) contains all languages~$L$ for which a polynomial-time verifier~$V:\{0,1\}^*\times\{0,1\}^*\rightarrow\{0,1\}$ exists such that for every~$x$, if~$x\in L$ then~$\exists!y:V(x,y)=1$, and if~$x\not\in L$ then~$\forall y:V(x,y)=0$.
\end{definition}
The complexity class~$\cc{UP}$ was first defined by Valiant~\cite{Valiant1976}.
The only difference between the classes~$\cc{NP}$ and~$\cc{UP}$ is that in the former, {\em multiple\/} witnesses are allowed.
The class~$\cc{coUP}$ contains all languages~$L$ where the complement of~$L$ is in~$\cc{UP}$.

We are now ready to state our first theorem.
\begin{theorem}
	$\cc{P_\text{NCCirc}}=\cc{UP}\cap\cc{coUP}$.
\end{theorem}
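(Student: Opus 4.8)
The plan is to prove the equality $\cc{P_\text{NCCirc}}=\cc{UP}\cap\cc{coUP}$ by establishing both inclusions separately, with each direction amounting to a translation between the language of \emph{unique witnesses} and the language of \emph{unique fixed points}. The conceptual bridge is that a logically consistent closed circuit has, by definition, exactly one fixed point, and this single object must simultaneously encode both the \emph{yes/no} decision (via its leading bit) and a witness.

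For the inclusion $\cc{P_\text{NCCirc}}\subseteq\cc{UP}\cap\cc{coUP}$, I would start from a deterministic NCCirc algorithm $\mathcal{A}$ deciding a language $L$. On input $x$ it produces a circuit $\mathcal{C}_x$ whose induced function $c_x$ has a unique fixed point $y$, and $x\in L$ iff $y=1z$. The key observation is that the fixed-point relation $c_x(y)=y$ is checkable in polynomial time once $y$ is supplied, since $c_x$ is just a Boolean circuit that can be evaluated efficiently. Thus I would define a $\cc{UP}$-verifier $V(x,y)$ that accepts exactly when $c_x(y)=y$ and the leading bit of $y$ is $1$. Because the fixed point is \emph{unique}, there is at most one accepting witness, and it exists precisely when $x\in L$; this places $L\in\cc{UP}$. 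Running the same argument with the leading-bit condition negated (accept when the leading bit is $0$) certifies non-membership with a unique witness, giving $L\in\cc{coUP}$ and hence $L\in\cc{UP}\cap\cc{coUP}$.

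For the reverse inclusion $\cc{UP}\cap\cc{coUP}\subseteq\cc{P_\text{NCCirc}}$, I would take $L\in\cc{UP}\cap\cc{coUP}$ with a $\cc{UP}$-verifier $V_{\text{yes}}$ for $L$ and a $\cc{UP}$-verifier $V_{\text{no}}$ for $\overline{L}$. For every $x$, exactly one of the two verifiers has a unique accepting witness and the other has none; concatenating the decision bit with that witness yields, for each $x$, a \emph{single} distinguished string. The task is then to build, in deterministic polynomial time from $x$, a Boolean circuit $\mathcal{C}_x$ whose induced map $c_x$ has this distinguished string as its unique fixed point. The natural construction is a constant map: design $c_x$ so that, given any candidate $y$, it runs $V_{\text{yes}}$ and $V_{\text{no}}$ on $y$ (interpreted as a decision bit plus a witness) and outputs the distinguished string whenever $y$ passes verification, while mapping every non-witness toward that same target, so the unique witness becomes the unique fixed point. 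Since $V_{\text{yes}}$ and $V_{\text{no}}$ are polynomial-time computable, they compile into polynomial-size circuits over AND, OR, NOT, and the padding/length bookkeeping is routine.

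The main obstacle I anticipate lies in the reverse direction: ensuring the constructed $c_x$ has \emph{exactly one} fixed point rather than accidentally creating spurious ones. A verifier-checking circuit is easy to write, but a naive ``output the target if the input verifies'' design can leave the behavior on non-witnesses unconstrained, potentially introducing extra fixed points among strings that fail verification yet happen to map to themselves. The careful part is therefore to force \emph{every} non-witness to map away from itself (e.g.\ toward the unique witness or through a fixed-point-free shift on the non-accepting region) while guaranteeing the genuine witness is a fixed point, all without knowing the witness in advance at circuit-construction time. Getting this total routing right, and verifying it respects the polynomial size and logical-consistency requirements of Equation~\eqref{eq:nccirc}, is where the real work concentrates; the two verifier inclusions on the forward side are comparatively mechanical.
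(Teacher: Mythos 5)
Your forward direction ($\cc{P_\text{NCCirc}}\subseteq\cc{UP}\cap\cc{coUP}$) is exactly the paper's argument: the verifier checks $c_x(y)=y$ together with the leading bit, and uniqueness of the fixed point gives uniqueness of the witness. The gap is in the reverse direction, where your primary construction --- a ``constant map'' that sends every input to the distinguished string --- cannot be implemented: the circuit must be built in polynomial time from $x$ alone, before anyone knows the witness, so it has no way to route non-witnesses \emph{toward} a target it cannot name. You flag this yourself, and your parenthetical alternative (``a fixed-point-free shift on the non-accepting region'') is in fact the entire resolution, but you leave it as the place ``where the real work concentrates'' rather than carrying it out; as written, the inclusion $\cc{UP}\cap\cc{coUP}\subseteq\cc{P_\text{NCCirc}}$ is not yet proved.

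The missing construction is short. Work on states $(b,w)\in\{0,1\}\times\{0,1\}^{q(|x|)}$ and \emph{never modify the witness register}: set
\begin{align}
	(b,w)\mapsto
	\begin{cases}
		(1,w)&V_\text{yes}(x,w)=1\,,\\
		(0,w)&V_\text{no}(x,w)=1\,,\\
		(b\oplus 1,w)&\text{otherwise.}
	\end{cases}
\end{align}
Because $w$ is left untouched, the only candidates for fixed points are pairs whose $w$-component verifies; for the unique yes-witness only $(1,w)$ is fixed (the pair $(0,w)$ maps to $(1,w)$), symmetrically for the unique no-witness, and every non-verifying $w$ is killed by the bit flip, which is fixed-point-free. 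Since $L\in\cc{UP}\cap\cc{coUP}$ guarantees exactly one verifying $w$ across the two verifiers, the closed circuit is logically consistent and its fixed point's leading bit decides $L$. Note that the witness is never ``found'' by the construction --- only the consistency condition of Equation~\eqref{eq:nccirc} singles it out --- which is precisely why no knowledge of the target is needed at circuit-construction time.
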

\begin{proof}
	We start by showing~$\cc{UP}\cap\cc{coUP}\subseteq\cc{P_\text{NCCirc}}$.
	Assume a language~$L$ is in~$\cc{UP}\cap\cc{coUP}$.
	Thus, there exist two polynomial-time verifiers~$V_\text{yes}$ and~$V_\text{no}$ such that for every~$x$, if~$x\in L$, then
	\begin{align}
		\exists!w:V_\text{yes}(x,w)=1 \wedge \forall w':V_\text{no}(x,w')=0
		\,,
	\end{align}
	and otherwise
	\begin{align}
		\forall w:V_\text{yes}(x,w)=0 \wedge \exists!w':V_\text{no}(x,w')=1
		\,.
	\end{align}
	The following deterministic NCCirc algorithm~$\mathcal{A}$ decides the language~$L$.
	Upon receiving~$x\in\{0,1\}^*$,~$\mathcal{A}$ generates the circuit~$\mathcal{C}_x$ as shown in Figure~\ref{fig:CX}.
	\begin{figure}
		\centering
		\includegraphics{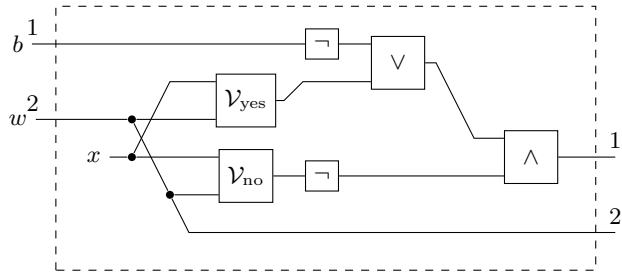}
		\caption{Circuit~$\mathcal{C}_x$ used to reduce a problem from~$\cc{UP}\cap\cc{coUP}$ to~$\cc{P_\text{NCCirc}}$. The wire that carries~$w$ consists of~$q(|x|)$ bits.}
		\label{fig:CX}
	\end{figure}
	The subcircuits~$\mathcal{V}_\text{yes},\mathcal{V}_\text{no}$ implement the verifiers~$V_\text{yes},V_\text{no}$, and can be constructed in polynomial time, because~$L$ is assumed to be in~$\cc{UP}\cap\cc{coUP}$.
	The circuit acts in the following way:
	\begin{align}
		c_x&:\{0,1\}\times\{0,1\}^{q(|x|)} \rightarrow \{0,1\}\times\{0,1\}^{q(|x|)}\,,\\
		&:(b,w)\mapsto
		\begin{cases}
			(0,w)&V_\text{no}(x,w)=1\,,\\
			(1,w)&V_\text{yes}(x,w)=1\,,\\
			(b\oplus 1,w)&\text{otherwise,}
		\end{cases}
	\end{align}
	where~$q$ is a polynomial.
	The function~$c_x$ has a {\em unique\/} fixed point.
	If~$x\in L$, then there exists a unique~$w$ with~$V_\text{yes}(x,w)=1$, and~$c_x(1w)=1w$.
	Otherwise, there exists a unique~$w$ with~$V_\text{no}(x,w)=1$, and~$c_x(0w)=0w$.

	The converse~($\cc{P_\text{NCCirc}}\subseteq\cc{UP}\cap\cc{coUP}$) holds for the following reason.
	First, assume~$L$ is in~$\cc{P_\text{NCCirc}}$.
	This means that for every~$x$ we have some logically consistent circuit~$\mathcal{C}'_x$.
	We design both verifiers~$V_\text{yes}$ and~$V_\text{no}$ to act as
	\begin{align}
		V_\text{yes}:&(x,z)\mapsto c_x(z)=z\wedge z=1w\,,\\
		V_\text{no}:&(x,z)\mapsto c_x(z)=z\wedge z=0w\,.
	\end{align}
	That is, both verifiers check whether~$z$ is a fixed point of~$\mathcal{C}_x$, and additionally check for the first bit.
\end{proof}

A corollary of this Theorem is that logically consistent CTCs cannot efficiently solve problems {\em outside\/} of the class~$\cc{UP}\cap\cc{coUP}$.
To state this corollary, we first define the complexity class~$\cc{P_\text{LCCTC}}$ of problems efficiently solvable by such CTCs.
\begin{definition}[Classical deterministic CTC algorithm and $\cc{P_\text{LCCTC}}$]
	A {\em classical deterministic CTC algorithm\/}~$\mathcal B$ is a polynomial time deterministic algorithm that takes as input some bit string~$x\in\{0,1\}^*$ and outputs~$N$ (the number of parties in the CTC setup), a list of non-negative integers $(m_1,n_1,m_2,n_2,\dots,m_N,n_N)$ where~$m_j=|\mathcal I_j|$ and~$n_j=|\mathcal O_j|$ (the size of the input and output spaces of the parties), a list of local operations $(f_1,f_2,\dots,f_N)$ where~$f_j:\mathcal I_j\rightarrow \mathcal O_j$, and a ``process function''~$w:\mathcal O\rightarrow\mathcal I$.
	We assume that the local operations and the ``process function'' are described as circuits over AND, OR, and NOT, as otherwise, complex computation could be hidden within these functions.
	If, under this choice of local operation and ``process function,'' party~$1$ receives some state~$i_1=1z\in\mathcal I_1$ for some~$z$, then we say~$\mathcal B$ {\em accepts\/}~$x$, otherwise,~$\mathcal B$ {\em rejects\/}~$x$.
	The algorithm~$\mathcal{B}$ {\em decides a language\/}~$L\subseteq\{0,1\}^*$ if~$\mathcal{B}$ accepts every~$x\in L$ and rejects every~$x\not\in L$.
	The class~$\cc{P_\text{LCCTC}}$ contains all languages decidable by some classical deterministic CTC algorithm. 
\end{definition}
The following inclusion relation follows immediately from the definitions and the Theorem above.
\begin{corollary}
	$\cc{P_\text{LCCTC}}\subseteq\cc{UP}\cap\cc{coUP}$.
\end{corollary}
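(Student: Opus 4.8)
The plan is to prove the stronger inclusion $\cc{P_\text{LCCTC}}\subseteq\cc{P_\text{NCCirc}}$ and then invoke the Theorem above to obtain $\cc{P_\text{LCCTC}}\subseteq\cc{P_\text{NCCirc}}=\cc{UP}\cap\cc{coUP}$. First I would take a language $L\in\cc{P_\text{LCCTC}}$ together with a classical deterministic CTC algorithm $\mathcal{B}$ deciding it, and build from it a deterministic NCCirc algorithm $\mathcal{A}$ deciding the same language. On input $x$, the algorithm $\mathcal{A}$ simply runs $\mathcal{B}$, which in polynomial time produces $N$, the dimensions $(m_1,n_1,\dots,m_N,n_N)$, the circuits for the local operations $f=(f_1,\dots,f_N)\colon\mathcal{I}\rightarrow\mathcal{O}$, and the circuit for the ``process function'' $w\colon\mathcal{O}\rightarrow\mathcal{I}$, all over AND, OR, and NOT.

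The core observation is that the CTC loop is closed precisely by the composition $w\circ f\colon\mathcal{I}\rightarrow\mathcal{I}$: the consistency equation of the setup reads $w(f(i))=i$, which is exactly the fixed-point condition for $c_x:=w\circ f$. Hence $\mathcal{A}$ constructs $\mathcal{C}_x$ by feeding the outputs of the $f$-circuit into the inputs of the $w$-circuit. Since both are given as Boolean circuits, this composition has polynomial size and is produced in polynomial time. Because $\mathcal{B}$ describes a \emph{logically consistent} CTC, the fixed choice $f$ is an element of $\mathcal{D}$, so Equation~\eqref{eq:lc} guarantees that $w\circ f$ has a \emph{unique} fixed point $i\in\mathcal{I}$; thus the closed circuit $\mathcal{C}'_x$ is logically consistent in the sense of Equation~\eqref{eq:nccirc}.

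It remains to align the two acceptance conventions. I would fix the wire ordering so that the block $\mathcal{I}_1$ occupies the first wires and, in particular, the first bit of $i_1$ is wire~$1$. Then the NCCirc acceptance condition ``the unique fixed point $y=c_x(y)$ has the form $y=1z$'' coincides exactly with the CTC acceptance condition ``party~$1$ receives $i_1=1z$,'' so $\mathcal{A}$ accepts $x$ if and only if $\mathcal{B}$ accepts $x$. Therefore $\mathcal{A}$ decides $L$, witnessing $L\in\cc{P_\text{NCCirc}}$.

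The only delicate point — and the step I would treat most carefully — is the guarantee of a unique fixed point. It is tempting to conflate the logical consistency of the CTC, which is a property of $w$ ranging over \emph{all} of $\mathcal{D}$, with what is actually needed here, namely its instance at the single fixed choice $f$ returned by $\mathcal{B}$; Equation~\eqref{eq:lc} supplies exactly the latter, so the hypothesis that $w$ defines a logically consistent CTC is both necessary and sufficient for $\mathcal{C}'_x$ to be a valid NCCirc instance. Everything else — the polynomial size of the composed circuit and the matching of the acceptance bit — is routine bookkeeping. Combining $\cc{P_\text{LCCTC}}\subseteq\cc{P_\text{NCCirc}}$ with the Theorem then yields $\cc{P_\text{LCCTC}}\subseteq\cc{UP}\cap\cc{coUP}$.
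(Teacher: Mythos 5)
Your proposal is correct and follows exactly the paper's route: the paper's own (one-line) proof also constructs a logically consistent circuit whose induced function is $c=w\circ f$ and appeals to the preceding Theorem. You simply spell out the details the paper leaves implicit — the polynomial-time composition of the circuits, that Equation~\eqref{eq:lc} instantiated at the fixed choice $f$ yields the unique fixed point, and the matching of the acceptance bit.
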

\begin{proof}
	Assume a language~$L$ is in~$\cc{P_\text{LCCTC}}$.
	Then~$L$ is also in~$\cc{P_\text{NCCirc}}$ as we can construct a logically consistent circuit~$C$ which has the induced function~$c=w\circ f$.
\end{proof}

Finally, we discuss the respective search problems.
\begin{definition}[$\cc{FUP}$]
	A binary relation~$R$ is in~$\cc{FUP}$ (Function~$\cc{UP}$) if and only if~$R$ is polynomially decidable, polynomially balanced,
	and~$\forall x:|\{y\,|\,(x,y)\in R\}|\leq 1$.
\end{definition}
Informally, a problem is in~$\cc{FUP}$ if for every instance there exists {\em at most\/} one solution.
\begin{definition}[$\cc{F}(\cc{UP}\cap\cc{coUP})$]
	A pair~$(R_1,R_2)$ of relations is in~$\cc{F}(\cc{UP}\cap\cc{coUP})$ if and only if
	both relations are polynomially decidable, polynomially balanced, and for every instance~$x$
	\begin{align}
		&\left(  \exists!y:(x,y)\in R_1 \wedge\forall z: (x,z)\not\in R_2 \right)\oplus\\
		&\left(  \forall y:(x,y)\not\in R_1 \wedge \exists!z:(x,z)\in R_2 \right)
	\end{align}
	holds.
	The exclusive or~($\oplus$) asks for {\em either\/} yet {\em not both\/} expressions to be true.
\end{definition}
Note that the output of a search problem in~$\cc{F}(\cc{UP}\cap\cc{coUP})$ is some string~$w$ that satisfies either~$(x,w)\in R_1$ or (exclusively)~$(x,w)\in R_2$ but, as we formulated it, does not tell us in {\em which\/} relation the pair~$(x,y)$ appears.
However, since both relations are polynomially decidable, we can check in polynomial time whether~$y$ is a solution of~$R_1$ or~$R_2$.
This brings us to the following class, which is equal.
\begin{definition}[$\cc{TFUP}$]
	A binary relation~$R$ is in~$\cc{TFUP}$ (Totally~$\cc{FUP}$) if and only if~$R$ is polynomially decidable, polynomially balanced, and~$\forall x,\exists!y:(x,y)\in R$.
\end{definition}
\begin{theorem}
	\label{thm:tfup}
	$\cc{TFUP}=\cc{F}(\cc{UP}\cap\cc{coUP})$.
\end{theorem}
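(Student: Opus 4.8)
The plan is to prove the two inclusions separately, reading the equality $\cc{TFUP}=\cc{F}(\cc{UP}\cap\cc{coUP})$ as the assertion that a search problem is solvable in one formalism exactly when it is solvable in the other. Since the two classes are populated by objects of different shape --- a single relation on one side, an ordered pair of relations on the other --- the translation in each direction will be an explicit, polynomial-time construction on the underlying relations that manifestly preserves polynomial decidability, polynomial balance, and, most importantly, uniqueness of the witness.

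For the inclusion $\cc{F}(\cc{UP}\cap\cc{coUP})\subseteq\cc{TFUP}$, given a pair $(R_1,R_2)$ satisfying the definition I would form the union $R:=R_1\cup R_2$ and verify the three defining conditions of $\cc{TFUP}$. Polynomial decidability of $R$ follows by deciding $R_1$ and $R_2$ separately and taking the disjunction; polynomial balance follows by taking the maximum of the two balancing polynomials. Totality together with a \emph{unique} witness is exactly the content of the exclusive-or: for every (valid) instance $x$ precisely one of the two relations contributes a solution, that solution is unique, and the other relation is \emph{empty} for this $x$, so $R$ contains exactly one pair $(x,y)$. The $\oplus$ is what rules out both relations holding simultaneously, which is precisely what would otherwise destroy uniqueness in the union. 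Finally, because $R_1$ and $R_2$ are individually decidable, from the witness returned for $R$ one can recover in polynomial time which relation it satisfied, matching the remark following the definition of $\cc{F}(\cc{UP}\cap\cc{coUP})$.

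For the converse $\cc{TFUP}\subseteq\cc{F}(\cc{UP}\cap\cc{coUP})$, given $R\in\cc{TFUP}$ I would split it according to the first bit of its unique witness, in analogy with the accept/reject convention ($y=1z$ versus $y=0z$) of the decision-class result:
\[
	R_1:=\{(x,y)\in R : y=1w \text{ for some } w\},\qquad
	R_2:=\{(x,y)\in R : y=0w \text{ for some } w\}.
\]
Both relations inherit polynomial decidability and polynomial balance from $R$. Since $R$ assigns to each $x$ exactly one witness $y$, and that $y$ begins with either $1$ or $0$, exactly one of $R_1,R_2$ contains the unique solution while the other is empty, which is precisely the $\oplus$ condition. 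If one prefers to avoid any reasoning about the leading bit, the even simpler choice $(R_1,R_2):=(R,\emptyset)$ also works, since then the left disjunct holds for every $x$; I would keep this fallback in reserve to dispatch the boundary case of an empty or length-zero witness.

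The main obstacle here is conceptual rather than computational: because the two classes contain objects of different types, the real work is to pin down the sense in which they are ``equal,'' namely that the $\cc{TFUP}$ search problem for $R_1\cup R_2$ and the $\cc{F}(\cc{UP}\cap\cc{coUP})$ search problem for $(R_1,R_2)$ are polynomial-time inter-reducible. Once that framing is fixed, each direction reduces to checking that the construction respects the three structural requirements and, in particular, that the exclusive-or guarantees the witness count is exactly one on both sides; no deeper technical difficulty remains.
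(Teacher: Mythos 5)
Your proposal is correct and matches the paper's proof: the paper likewise takes $R=R_1\cup R_2$ for one inclusion and the pair $(R,\emptyset)$ for the other, which you give as your fallback. Your first-bit splitting is an unnecessary (and slightly more delicate, because of the empty-witness case) variant of the same idea, so there is no substantive difference in approach.
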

\begin{proof}
	Let~$R$ be a relation in~$\cc{TFUP}$ and~$R_1,R_2$ two relations such that for every~$x$:
	\begin{align}
		&\left(  \exists!y:(x,y)\in R_1 \wedge\forall z: (x,z)\not\in R_2 \right)\oplus\\
		&\left(  \forall y:(x,y)\not\in R_1 \wedge \exists!z:(x,z)\in R_2 \right)
		\,.
	\end{align}
	To show~$\cc{TFUP}\subseteq\cc{F}(\cc{UP}\cap\cc{coUP})$, set~$R_1=R$ and~$R_2=\emptyset$,
	and to show~$\cc{F}(\cc{UP}\cap\cc{coUP})\subseteq\cc{TFUP}$, set~$R=R_1\cup R_2$.
\end{proof}
A similar statement~$\cc{TFNP}=\cc{F(NP}\cap\cc{coNP})$ can also be made~\cite{Megiddo1991}.
The complexity class~$\cc{TFNP}$ is the class of all {\em total\/} relations that are polynomially decidable and polynomially balanced.

We now state and prove our last theorem.
\begin{theorem}
	$\cc{FP_\text{NCCirc}}=\cc{TFUP}$.
\end{theorem}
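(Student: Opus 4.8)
The plan is to prove the two inclusions $\cc{TFUP}\subseteq\cc{FP_\text{NCCirc}}$ and $\cc{FP_\text{NCCirc}}\subseteq\cc{TFUP}$ separately, reusing the fixed-point gadget from the proof of the first Theorem but in the simpler single-relation form that a total, single-valued search problem allows.

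For $\cc{TFUP}\subseteq\cc{FP_\text{NCCirc}}$ I would start from a relation $R\in\cc{TFUP}$, so there is a polynomial $q$ bounding witness length and a polynomial-time decider for $R$. By the standard uniform Turing-machine-to-circuit construction this decider becomes a polynomial-time-constructible family of Boolean circuits $\mathcal{V}$ over AND, OR, NOT computing $V$ with $V(x,y)=1$ iff $(x,y)\in R$. Since $R$ is total and single-valued, for each $x$ there is a unique witness $y^*$, which (after a canonical self-delimiting padding) I regard as a string of length $q(|x|)$. I then let the NCCirc algorithm $\mathcal{A}$ output, on input $x$, the circuit $\mathcal{C}_x$ (again over AND, OR, NOT, using a multiplexer) that computes
\[
c_x(y)=\begin{cases} y & V(x,y)=1,\\ \bar y & \text{otherwise,}\end{cases}
\]
where $\bar y$ is the bitwise complement on the bundle of $q(|x|)$ wires. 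The fixed points of $c_x$ are then exactly the strings with $V(x,y)=1$, because $\bar y\neq y$ for every nonempty $y$; hence $\mathcal{C}'_x$ is logically consistent with the single fixed point $y^*$, and $(x,y^*)\in R$, so $\mathcal{A}$ decides $R$.

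For $\cc{FP_\text{NCCirc}}\subseteq\cc{TFUP}$ I would take an NCCirc algorithm $\mathcal{A}$ and consider the relation it computes, namely $\{(x,y):c_x(y)=y\}$ with $\mathcal{C}_x=\mathcal{A}(x)$. The logical-consistency requirement says precisely that for every $x$ there is exactly one such $y$, which gives totality and uniqueness, $\forall x\,\exists! y$. The witness $y$ has length equal to the polynomially bounded number of wires of $\mathcal{C}_x$, so the relation is polynomially balanced. It is polynomially decidable as well: on input $(x,y)$ one runs $\mathcal{A}(x)$ to obtain $\mathcal{C}_x$ in polynomial time, evaluates the polynomial-size circuit on $y$, and accepts iff $c_x(y)=y$. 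These three properties are exactly the definition of $\cc{TFUP}$.

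Both constructions are essentially mechanical, so I expect no deep obstacle; the points that need care are, first, checking that the non-solution branch of $c_x$ is genuinely fixed-point-free, which the bitwise complement guarantees for every nonempty string (assuming at least one wire, ensured by padding) and which is what forces the unique fixed point to coincide with the unique witness without ever violating logical consistency, and second, the encoding that lets witnesses of length $\le q(|x|)$ occupy a fixed-width bundle of wires while preserving uniqueness, handled by the canonical padding. Since nothing here invokes an unproven hypothesis, the theorem follows almost immediately from the definitions together with the fixed-point gadget already used to establish $\cc{P_\text{NCCirc}}=\cc{UP}\cap\cc{coUP}$.
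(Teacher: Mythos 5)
Your proposal is correct and follows essentially the same route as the paper's proof: both directions use the same fixed-point gadget (the paper flips only the first bit of $y$ on the reject branch, you take the bitwise complement — an immaterial difference, since either is a fixed-point-free map), and the converse inclusion is argued identically by observing that the fixed-point relation is total, single-valued, polynomially balanced, and polynomially decidable by constructing and evaluating $\mathcal{C}_x$. Your extra attention to padding witnesses to a fixed wire width is a detail the paper leaves implicit, but it does not change the argument.
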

\begin{proof}
	We start with~$\cc{TFUP}\subseteq\cc{FP_\text{NCCirc}}$.
	A binary relation~$R$ in~$\cc{TFUP}$ is polynomially decidable and polynomially balanced.
	Therefore, there exists an algorithm~$\mathcal{D}$ that takes two inputs~$x,y$, runs in polynomial time in~$|x|$, and if~$(x,y)\in R $ then~$\mathcal{D}$ outputs ``yes,'' otherwise,~$\mathcal{D}$ outputs ``no.''
	Furthermore, for every instance~$x$ there exists a {\em unique\/}~$y$ with~$(x,y)\in R$.
	The deterministic NCCirc algorithm~$\mathcal{A}$, upon receiving~$x$, generates the circuit~$\mathcal{C}_x$ that acts as
	\begin{align}
		c_x:y\mapsto
		\begin{cases}
			y&(x,y)\in R\,,\\
			y'&\text{otherwise,}
		\end{cases}
	\end{align}
	where, if~$y=bz$ with~$b\in\{0,1\}$, then~$y'=(b\oplus 1)z$.
	Thus, for every~$x$ we have a circuit~$\mathcal{C}_x$ with the unique fixed point that equals the solution, {\it i.e.},~$c_x(y)=y\implies (x,y)\in R$.
	The converse inclusion relation~$\cc{FP_\text{NCCirc}}\subseteq\cc{TFUP}$ is shown as follows.
	Suppose we are given a relation~$R$ that is decidable by a deterministic NCCirc algorithm~$\mathcal{A}$.
	We now need to show that~$R$ is polynomially decidable, polynomially balanced, and that every~$x$ has a {\em unique\/} solution.
	Indeed,~$R$ is polynomially decidable and polynomially balanced because~$\mathcal{C}_x$ is generated in polynomial time, and~$\mathcal{C}_x$ upon input~$y$ is computed in polynomial time in~$|x|$.
	Furthermore,~$\mathcal{C}_x$ has a {\em unique\/} fixed point.
	The algorithm~$\mathcal{D}$ to decide~$R$ on input~$x$ returns the truth value of~$c_x(y)=y$.
\end{proof}

\section{Example:\ Integer factorization}
We give an example of an algorithm to factorize integers.
The NCCirc algorithm~$\mathcal{A}$ outputs, on input~$N\in\mathbb{Z}$,~a circuit~$\mathcal{C}_N$ with which~$N=p_1^{e_1}p_2^{e_2}\dots$ can be decomposed into its prime factors~$p_1,p_2,\dots$ along with its exponents~$e_1,e_2,\dots$.
We give a description of~$\mathcal{C}_N$ as an algorithm.
Clearly, this algorithm can be transformed into a circuit.
The following algorithm runs in a time polynomial in~$n=\lceil\log{N}\rceil$.
\begin{algorithm}[H]
	\caption{{\sf Factoring}~$N$}
	\label{alg:fact}
	\begin{algorithmic}[1]
		\Require $b\in\{0,1\},a_1,a_2,\dots,a_n,e_1,e_2,\dots,e_n\in K$
		\Ensure $b'\in\{0,1\},a_1,a_2,\dots,a_n,e_1,e_2,\dots,e_n\in K$
		\State $w \leftarrow \neg b,a_1,a_2,\dots,a_n,e_1,e_2,\dots,e_n$
		\For{$i=1 \text{ to } n-1$}
			\If{$(a_i < a_{i+1}) \vee (a_i\not=1 \wedge a_i=a_{i+1})$}
				\State\Return $w$
			\EndIf
		\EndFor
		\For{$i=1 \text{ to } n$}
			\If{$(a_i=1 \wedge e_i>1) \vee a_i \not\in \textsf{PRIME}\cup\{1\}$}
				\State\Return $w$
			\EndIf
		\EndFor
		\If{$a_1^{e_1}a_2^{e_2}\dots a_n^{e_n}\not=N$}
			\State\Return $w$
		\EndIf
		\State\Return $0,a_1,a_2,\dots,a_n,e_1,e_2,\dots,e_n$
	\end{algorithmic}
\end{algorithm}
Algorithm~\ref{alg:fact} takes as input~$1$ bit and~$2n$ numbers in~$K=\{1,2,\dots,N-1\}$, where every number is represented as an~$n$-bit string.
On line~$3$ we check whether the first~$n$ numbers are ordered.
On line~$8$ we check whether~$e_i$ is~$1$ whenever~\mbox{$a_i=1$}, and whether~$a_i$ is indeed prime (or~$1$).
A deterministic primality test can be performed in polynomial time as was recently shown~\cite{Agrawal2004}.
Finally, on line~$12$ we check whether the decomposition is correct.
If all tests pass, then the algorithm returns~$0,a_1,a_2,\dots,a_n,e_1,e_2,\dots,e_n$ where~$\prod_{i=1}^n a_i^{e_i}=N$, otherwise, the algorithm {\em flips\/} the first input bit.
This algorithm and, therefore, the circuit~$\mathcal{C}_N$, has the {\em unique\/} fixed point~$0,p_1,p_2,\dots,p_m,1^{n-m},e_1,e_2,\dots,e_m,1^{n-m}$, where~$p_1>p_2>\dots>p_m$ are primes and~$\prod_{i=1}^m p_i^{e_i}=N$.
Intuitively, one can understand this algorithm as ``killing the grandfather'' whenever a wrong factorization is given --- which resembles an instantiation of ``anthropic computing''~\cite{Aaronson2005} or ``quantum suicide''~\cite{Tegmark1998}.

\section{Conclusion and open questions}
The non-causal circuit model describes circuits where the assumption of a {\em global causal order\/} is replaced by the assumption of {\em logical consistency\/} ({\it i.e.}, no grandfather antinomy and no uniqueness ambiguity).
The problems that are solvable in polynomial time by such circuits form the complexity class~$\cc{P_\text{NCCirc}}$.
We show that this class equals~$\cc{UP}\cap\cc{coUP}$, where~$\cc{UP}$ consists of all problems in~$\cc{NP}$ which have an {\em unambiguous\/} accepting path.
Notable problems within~$\cc{UP}\cap\cc{coUP}$ are integer factorization and parity games.
Intuitively, the class~$\cc{P_\text{NCCirc}}$ contains all search problems that can be solved by determining the {\em unique\/} fixed point of a specific reformulation of the problem.
This gives a new interpretation of the class~$\cc{UP}\cap\cc{coUP}$.
The {\em uniqueness\/} requirement can be understood as arising from the assumption of no {\em overdetermination\/} (grandfather antinomy) and of no {\em underdetermination\/} (uniqueness ambiguity).
Similar complexity classes to~$\cc{FP_\text{NCCirc}}$ (the functional equivalent of~$\cc{P_\text{NCCirc}}$) are~$\cc{FIXP}$ and~$\cc{linear-FIXP}=\cc{PPAD}$~\cite{Etessami2010}.
Problems within these classes are fixed-point problems where {\em multiple\/} fixed points might exist, and in~$\cc{FIXP}$, the fixed points are allowed to be {\em irrational}.
Finding a Nash equilibrium for two parties is~$\cc{linear-FIXP-complete}$, and the same problem for three parties or more is~$\cc{FIXP-complete}$~\cite{Etessami2010}.
The class~$\cc{P_\text{NCCirc}}=\cc{UP}\cap\cc{coUP}$ is not believed to contain {\em complete\/} problems~\cite{Sipser}.

This result leads us to conclude that classical deterministic closed timelike curves, based on the framework for correlations without global causal order, {\em cannot\/} efficiently solve problems outside of~$\cc{UP}\cap\cc{coUP}$, {\it i.e.},~$\cc{P_\text{LCCTC}}\subseteq\cc{P_\text{NCCirc}}$.
The reason for this is that in the CTC model we require the composed map of the parties with the environment to have a unique fixed point for {\em any choice\/} of local operations of the parties.
This assumption was dropped when defining the non-causal circuit model.
However, note the subtlety that the framework for classical correlations without causal order (as opposed to the classical deterministic CTC model) could, then again, efficiently solve problems not solvable by classical deterministic CTCs.
The reason for this is that in the correlations framework, {\em fine-tuned\/} process matrices are allowed~\cite{Baumeler2016} which are inherently probabilistic --- here, we focused on {\em deterministic\/} CTCs instead.

When we compare this result to the computational power of the Deutsch CTC model, we note that the CTC model studied here is {\em dramatically\/} weaker.
This (possibly extreme) drop of computational power could be explained by the assumption of {\em linearity\/} which, in contrast to Deutsch's model, is present in the model studied here.
It is known that non-linearity can lead to astonishing results~\cite{Gisin1990,Polchinski1991,Abrams1998}.
Put differently, the absence of the grandfather antinomy allows to efficiently solve problems in~$\cc{PSPACE}$, yet, if we additionally ask for the absence of the uniqueness ambiguity, the computational power drops down to~$\cc{UP}\cap\cc{coUP}$.
In a similar spirit, the Deutsch version of CTCs restricted to {\em deterministic\/} fixed points gives a power of at most~$\cc{NP}\cap\cc{coNP}$~\cite{haltingp}.

One can put this result in the following perspective:
Previous results on closed timelike curves show that CTCs are not problematic from a general relativity theory point of view, from a logic point of view, and now we show their relative {\em innocence\/} from a computational point of view.

Some of the main open questions that remain are:
Does~$\cc{P_\text{LCCTC}}\supseteq\cc{P_\text{NCCirc}}$ hold or not, what are the probabilistic~($\cc{BPP_\text{NCCirc}}$, $\cc{BPP_\text{LCCTC}}$) and the quantum~($\cc{BQP_\text{NCCirc}}$, $\cc{BQP_\text{LCCTC}}$) versions of the complexity classes defined here, and how does~$\cc{BQP}$ relate to~$\cc{P_\text{NCCirc}}$?

\vspace{0.75cm}{\bf Data accessibility.}
This paper has no data.

\vspace{0.75cm}{\bf Competing interests.}
We have no competing interests.

\vspace{0.75cm}{\bf Authors' contributions.}
Both authors contributed equally on the results and on the paper.
Both authors gave final approval for publication.

\vspace{0.75cm}{\bf Acknowledgments.}
	We thank Adarsh Amirtham, Veronika Baumann, Gilles Brassard, Harry Buhrman, Paul Erker, Arne Hansen, and Alberto Montina for helpful discussions.
	We thank Claude Cr{\'e}peau for his kind invitation to the 2016 Bellairs Workshop, McGill Research Centre, Barbados, where the main ideas emerged.
	We thank four anonymous referees for helpful comments.

\vspace{0.75cm}{\bf Funding.}
	This work was supported by the Swiss National Science Foundation (SNF) and the National Centre of Competence in Research ``Quantum Science and Technology'' (QSIT).

\bibliography{refs}

\end{document}